\DeclareMathOperator*{\argmin}{argmin}
\DeclareMathOperator*{\argmax}{argmax}
\newcommand{\rev}[1]{{\color{black} {#1}}}
\newcommand{\R}{\mathbb{R}}
\newcommand{\e}{\varepsilon}
\newcommand{\E}{\mathbb{E}}
\newcommand{\F}{\text{F}}
\newcommand{\Lh}{L_\text{high}}
\newcommand{\Ll}{L_\text{low}}
\newcommand{\hx}{\hat{x}}
\newcommand{\sigbm}{\sigma^{\text{BM3D}}}
\newcommand{\s}{\mathfrak{p}}
\newtheorem{theorem}{Theorem}[section]
\newtheorem{definition}[theorem]{Definition}
\newtheorem{assump}{Assumption}
\def\BibTeX{{\rm B\kern-.05em{\sc i\kern-.025em b}\kern-.08em
    T\kern-.1667em\lower.7ex\hbox{E}\kern-.125emX}}
\begin{document}
%\receiveddate{6 September, 2023}
%\reviseddate{XX Month, XXXX}
%\accepteddate{XX Month, XXXX}
%\publisheddate{XX Month, XXXX}
%\currentdate{XX Month, XXXX}
%\doiinfo{XXXX.2022.1234567}

%\markboth{}{Shani {et al.}}

\title{Denoiser-based projections for 2D super-resolution MRA}

\author{Jonathan Shani, Tom Tirer, Raja Giryes, and Tamir Bendory}

%\author{Jonathan Shani\authorrefmark{1}, Tom Tirer\authorrefmark{2}, Raja Giryes\authorrefmark{1}, and Tamir Bendory\authorrefmark{1}}
%\affil{Tel Aviv University}
%\affil{Bar Ilan University University}
%\corresp{Corresponding author: First A. Author (email: author@ boulder.nist.gov).}
% \authornote{T.T is partially supported by the ISF grant no. 1940/23. 
% 	T.B. is partially supported by the BSF grant no. 2020159, 
% 	NSF-BSF grant no. 2019752, and ISF grant no. 1924/21. R.G. is partially supported by the ERC-StG (No. 757497) grant.}

%\begin{IEEEkeywords}
%Method of Momented, Projected Gradient Descent, Expectation Minimization, MRA
%\vspace{-0.1in}
%\end{IEEEkeywords}

\maketitle

\begin{abstract}
	We study the 2D super-resolution multi-reference alignment (SR-MRA) problem: estimating an image from its down-sampled, circularly translated, and noisy copies.
	The SR-MRA problem serves as a mathematical abstraction of the structure determination problem for biological molecules.
	Since the SR-MRA problem is ill-posed without prior knowledge, accurate image estimation relies on designing priors that describe the statistics of the images of interest. 
	In this work, we build on recent advances in image processing and harness the power of denoisers as priors for images. 
	To estimate an image, we propose utilizing denoisers as projections and using them within two computational frameworks that we propose: projected expectation-maximization and projected method of moments. We provide an efficient GPU implementation and demonstrate the effectiveness of these algorithms through extensive numerical experiments on a wide range of parameters and images.
\end{abstract}

\section{Introduction}
%\IEEEPARstart{2}{D} 
2D super-resolution multi-reference alignment (SR-MRA) entails estimating an image $x \in \mathbb{R}^{\Lh\times{\Lh}}$ from its $N$ circularly-translated, down-sampled, noisy copies:
\begin{equation} \label{eq:model_1}
	y_i = PR_{s_i}x+\e_i, \quad i= 1,\ldots,N,
\end{equation}
where $R_{s}$ denotes a 2D circular translation,
$P$ denotes a down-sampling operator that collects $\Ll\times \Ll$ equally-spaced samples of $R_sx$, and
$\e_i \in \mathbb{R}^{\Ll\times{\Ll}}$ is a noise matrix whose entries are drawn i.i.d.\ from $\mathcal{N}(0,\sigma^2)$.
The 2D translation~$s$ is composed of a horizontal translation $s^1$ and a vertical translation $s^2$, which are drawn i.i.d.\ from unknown distributions, $\rho_1$ and $\rho_2$, respectively. 
Explicitly, each observation $y_i\in \mathbb{R}^{\Ll\times{\Ll}}$ takes the form
$y_{i}[n_1,n_2] =
 x\left[n_1K-s_i^1, n_2K-s_i^2\right]  +\e_i[n_1,n_2],$
where $n_1,n_2=0,\ldots,\Ll-1$, the shifts should be considered modulo $\Lh$, and $K := \frac{\Lh}{\Ll}$ is assumed to be an integer. 
Our goal is to estimate $x$ (the high resolution image) from~$N$ low-resolution observations $y_1,\ldots,y_N,$ when the translations $s_1,\ldots,s_N$ are unknown. 

The SR-MRA model, first studied for 1-D signals~\cite{bendory2020super}, is a special case of the multi-reference alignment (MRA) model: The problem of estimating a signal from its noisy copies, each acted upon by a random element of some group; see for example~\cite{bandeira2015non,bandeira2017estimation,bendory2017bispectrum,abbe2018multireference,perry2019sample,bendory2022dihedral,zhao2022adaptive}.
The MRA model is mainly motivated by the single-particle cryo-electron microscopy (cryo-EM) technology: an increasingly popular technique to construct 3-D molecular structures~\cite{bendory2020single}. 
In Section~\ref{sec:conclusion}, we introduce the mathematical model of cryo-EM in detail 
and discuss how the techniques proposed in this paper have the potential to make a significant impact on the molecular reconstruction problem of cryo-EM.
The MRA model is also motivated by a variety of applications in biology~\cite{park2011stochastic}, robotics~\cite{rosen2020certifiably}, radar~\cite{zwart2003fast}, and image processing~\cite{bendory2021compactification}.

The two leading computational techniques for solving MRA problems are the method of moments (MoM) and expectation-maximization (EM) \cite{bendory2017bispectrum, abbe2018multireference}.
MoM is a classical parameter estimation technique, aiming to recover the parameters of interest
from the observed moments. 
MoM requires a single pass over the observations, making it an efficient technique for large data sets (large $N$) but is not statistically efficient.
The EM algorithm aims to maximize the likelihood function (or the posterior distribution in a Bayesian framework)~\cite{dempster1977maximum}.
As Theorem~\ref{thm:uniqueness} shows,   
it is impossible to uniquely identify the image $x$ only from the SR-MRA observations~\eqref{eq:model_1}
as many different images result in the same likelihood function.
Thus, to estimate the image accurately, we need to incorporate a prior. 
In this work, we harness a recent line of works that use denoisers as priors and show how to incorporate them into the MoM and EM in the context of SR-MRA.

Utilizing existing state-of-the-art denoisers was proposed as part of the Plug-and-Play technique~\cite{venkatakrishnan2013plug} and was proven  highly effective for many imaging inverse problems~\cite{RED, PnP_ADMM,Heide2014FlexISP,zhang2017learning, tirer2018image,tirer2019super,DPIR,PnP19Ryu,Rond16Poisson,kamilov2017plug,
wei2020tuningfree,Meinhardt17Learning,Buzzard18PnP,Sun19Online,
Reehorst19RED,Ryu2019PlugandPlayMP,Sun2019Block,Lai2023Nabla}.   
Essentially, the underlying idea is exploiting the impressive capabilities of existing denoising algorithms to replace explicit, traditional priors. 
Indeed, these methods are especially useful for natural images, whose statistics are too complicated to be described explicitly, but for which excellent denoisers have been devised. 
A natural question is why not just use the standard plug-and-play framework for SR-MRA instead of developing a dedicated strategy for EM and MoM as we do in this work. The simple answer is that, despite intensive efforts,
we could not combine the plug-and-play approach  
with MoM and EM in a stable way. Therefore, we found that it is necessary to develop a dedicated strategy as we do in this work.

Our proposed computational framework, presented in Section~\ref{sec:projected_alg}, uses denoisers as projection operators. Specifically, every few iterations of either MoM or EM, we apply a denoiser to the current estimate. 
This stage can be interpreted as projecting the image estimate onto the implicit space spanned by the denoiser.
Section~\ref{sec:Convergence} provides  convergence analysis, and 
Section~\ref{sec:experiments} demonstrates the effectiveness of our method by extensive numerical experiments on images. 

The contribution of this paper may be summarized as follows: (i) We propose a novel framework for the SR-MRA problem that relies on using denoisers as priors throughout the optimization; (ii) we show that the method is generic by applying it to two popular methods for SR-MRA, namely MoM and EM; (iii) we provide an initial analysis for the convergence of our proposed algorithms; (iv) we open-source the code of our approach that efficiently implements our strategy on a GPU for getting efficient parallel processing. 

\section{Background}
\label{sec:background}

Before introducing our framework, we elaborate on four pillars of this work: the non-uniqueness of the SR-MRA model~\eqref{eq:model_1}, MoM, EM, and denoiser-based priors.
Hereafter, we let $\rho:=[\rho_{1}, \rho_{2}]^T$, and 
define the set of sought parameters as $\theta:=(x,\rho)$. %=(x,\rho_{1}, \rho_{2})$. 
With a slight abuse of notation, we treat the image $x$ and a measurement $y_i$ as  vectors in $\R^{\Lh^2}$ and $\R^{\Ll^2}$, respectively. 
We also define $\Theta=\R^{\Lh^2}\times\Delta_{\Lh}\times \Delta_{\Lh}$, where $\Delta_{\Lh}$ is the $\Lh$-dimensional simplex, 
so that $\theta\in\Theta$. 

\noindent {\bf SR-MRA observations do not identify the image $x$ uniquely.} 
The following theorem states that the likelihood function of~\eqref{eq:model_1} does not determine the image $x$ uniquely. This, in turn, implies that a prior is necessary for accurate estimation of the sought image. 

\begin{theorem} \label{thm:uniqueness}
	The likelihood function $p(y_1,\ldots,y_N|x,\rho)$ does not determine uniquely the sought parameters $x$ and $\rho$. 
\end{theorem}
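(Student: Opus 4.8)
The plan is to show that the likelihood is a functional of a much coarser object than $(x,\rho)$, and then exhibit many images producing the same coarse object. First, since the noise is i.i.d., $p(y_1,\ldots,y_N|x,\rho)=\prod_{i=1}^N p(y_i|x,\rho)$ with $p(y_i|x,\rho)=\E_{s\sim\rho}[\phi_\sigma(y_i-PR_sx)]$, where $\phi_\sigma$ is the $\mathcal N(0,\sigma^2 I)$ density on $\R^{\Ll^2}$ and $s=(s^1,s^2)$ with $s^1\sim\rho_1$, $s^2\sim\rho_2$ independent. Hence the likelihood depends on $(x,\rho)$ only through the law of the random vector $PR_sx$. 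So it suffices to produce $(x,\rho)\ne(x',\rho')$ with $PR_sx$ and $PR_{s'}x'$ equal in distribution; I will in fact keep $\rho'=\rho$ fixed and vary only the image.

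Next I would use a polyphase decomposition of the forward map. Writing the translation coordinates as $s^1=j_1K+r_1$ and $s^2=j_2K+r_2$ with $0\le r_1,r_2<K$ and $0\le j_1,j_2<\Ll$, a direct index computation from~\eqref{eq:model_1} shows that $PR_sx$ equals the 2-D circular translation by $(j_1,j_2)$ (within $\mathbb Z_{\Ll}\times\mathbb Z_{\Ll}$) of the polyphase component $x_{r_1,r_2}\in\R^{\Ll\times\Ll}$ defined by $x_{r_1,r_2}[n_1,n_2]:=x[(n_1K-r_1)\bmod\Lh,\,(n_2K-r_2)\bmod\Lh]$. Thus the law of $PR_sx$ is a mixture, over the $K^2$ polyphase components, of the conditional law of a circular translate of the selected component.

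The crux is to specialize to $\rho_1=\rho_2=$ uniform. Then $(r_1,r_2,j_1,j_2)$ is uniform, so the law of $PR_sx$ is the uniform mixture, over the $K^2$ polyphase components, of the uniform distribution on all $\Ll^2$ circular translates of that component — and this distribution is unchanged if any $x_{r_1,r_2}$ is replaced by an arbitrary circular translate of itself, since translating a signal does not change the set of its circular translates. Therefore, letting $x'$ be the image obtained from $x$ by replacing each of the $K^2$ polyphase components by an independently chosen circular translate (for instance, translating one component and fixing the rest), one gets that $PR_sx'$ and $PR_sx$ have the same law under the same uniform $\rho$, hence equal likelihoods. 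Finally I would count: this yields $(\Ll^2)^{K^2}$ images $x'$, whereas the global circular translations of $x$ — which act on the polyphase components by one common cyclic relabeling of $(r_1,r_2)$ together with one common translation — number at most $\Lh^2=K^2\Ll^2$; so for $K\ge2$ the vast majority of these $x'$ are not translates of $x$ at all, which already shows ``many different images'' with identical likelihood. One may also record the trivial symmetry $PR_sR_tx=PR_{s+t}x$, valid for every $\rho$, giving $(R_tx,\rho(\cdot-t))\equiv(x,\rho)$.

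The step I expect to be the main obstacle is the polyphase bookkeeping modulo $\Lh$ and $\Ll$: one must carefully confirm both that $PR_sx$ is genuinely a circular translate of $x_{r_1,r_2}$, and that any global circular translation of $x$ permutes the polyphase components by a \emph{single} cyclic shift applied uniformly to all of them. This last point is exactly what guarantees that the constructed family of $x'$ truly contains images that are not mere translates of $x$ (e.g.\ translating one polyphase component by $(1,0)$ while fixing the rest is never a global translation of $x$ unless $x$ is degenerate), rather than collapsing back to the trivial translation symmetry.
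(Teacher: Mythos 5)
Your proposal is correct and follows essentially the same route as the paper: decompose $x$ into its $K^2$ sub-images (polyphase components), observe that each observation is a noisy circular translate of one component on the low-resolution grid, and conclude that the likelihood is blind to independently translating the components, yielding on the order of $(\Ll^2)^{K^2}$ images with identical likelihood versus only $\Lh^2$ global translates. The only differences are cosmetic: you fix $\rho$ to be uniform so that the distribution need not be re-parameterized (a clean touch), whereas the paper works with a general mixture distribution over (component, translation) pairs and additionally records invariance under the $K^2!$ permutations of the components.
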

\noindent \emph{Proof.} We follow the proof of~\cite[Theorem 3.1]{bendory2020super}.	
	Recall that $y = PR_{s}x+\varepsilon$, and that $K:=\frac{\Lh}{\Ll} $ is an integer. Let us define a set of $K^2$ sub-images, indexed by $n_1,n_2=0,\ldots,K-1$:
	\begin{equation*}
		x_{n_1,n_2}[\ell_1,\ell_2] := x[n_1+\ell_1K ,n_2+\ell_2K], 
	\end{equation*}
	for $\ell_1,\ell_2=0,\ldots,\Ll-1.$
	Then, the SR-MRA model reads 
	\begin{equation}
		y = R_{t}x_{n_1,n_2} + \varepsilon,    
	\end{equation}
	where $R_t$ is a translation over the low-resolution grid of size $\Ll\times\Ll$. 
	We  denote the distribution of choosing  the sub-image $x_{n_1,n_2}$ and  translating it by $t$ as $\rho[n_1,n_2,t]$.
	Thus, the likelihood function of~\eqref{eq:model_1}, for a single observation $y$, can be written, up to a constant, as
	\begin{equation*}
		p(y; x,\rho) =  \sum_{n_1,n_2=0}^{K-1}\sum_{t}\rho[n_1,n_2,t]e^{-\frac{1}{2\sigma^2}\|y -R_{t}x_{n_1,n_2}\|_2^2}.
	\end{equation*}
	Note that the likelihood function is invariant under any 
	permutation of the sub-images (overall $K^2!$ permutations), and under any translation of each of them (overall $(\Ll^2)^{K^2}$ possible translations). 
	This implies that there are $K^2!(\Ll^2)^{K^2}$ different images with the same likelihood function.
\qed

Figure~\ref{fig:ref_images_MoM} compares image recovery  
using projected MoM and projected EM, which are our proposed algorithms (see Section~\ref{sec:projected_alg}). Specifically, the first and third rows show recoveries using Algorithm~\ref{proj_MoM} and Algorithm~\ref{proj_EM}, respectively, while the second and fourth rows show the outputs of the standard MoM and EM, with no prior. 
Evidently, in light of Theorem~\ref{thm:uniqueness}, the latter results are of low quality.
 
\begin{figure}[t]
    \centering    \includegraphics[width=0.5\textwidth]{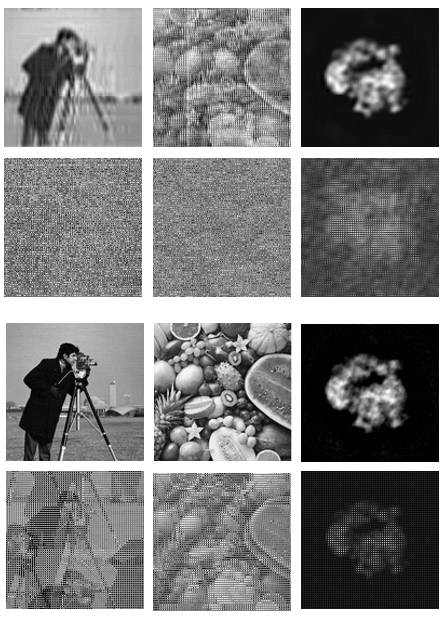}
    \caption{
	The first and third  rows present  recoveries using 
	projected MoM and projected EM.
	Second and fourth rows present recoveries using MoM and EM, without projection.  
    As Theorem~\ref{thm:uniqueness} indicates, the projection is vital for accurate recovery. 	
    	Images are of size $\Lh=128$ with a down-sampling factor of $K=2$. MoM used the exact moments, and EM was applied with noise level $\sigma=1/8$ and number of observations $N=10^4$.
    	In the projected versions, we set $\F=5$ (see Section~\ref{sec:projected_alg}).
    }
    \label{fig:ref_images_MoM}
\end{figure}

\noindent {\bf The method of moments (MoM)} aims to estimate the parameters of interest 
from the observable moments.
Recall that the  $d$-th observable moment is given by 
\begin{equation} \label{eq:empiric_MOM}
	\widehat{M_d} := \frac{1}{N} \sum_{i=1}^N y_i^{\bigotimes d}, 
\end{equation}
where ${y}^{\bigotimes d}$ is a tensor with $(\Ll^2)^{d}$ entries, and the entry indexed by $\ell=(\ell_1,\ldots,\ell_d)$ is given by $\prod_{j=1}^d y[\ell_j]$.
Computing the observable moments requires a single pass over the observations and results in a concise summary of the data. 
By the law of large numbers, for sufficiently large $N$, we have 
\begin{equation} \label{eq:moments_convergence}
    \widehat{M_d}\approx \E y^{\bigotimes d}:=M_d(\theta),
\end{equation}
where the expectation is taken against the distribution of the translations and the noise.  
Finding the optimal parameters~$\theta$ 
that fit the observable moments 
 is usually performed by minimizing a least squares objective, 
\begin{equation} \label{eq:LS}
\min_{\theta\in\Theta} \sum_{d=1}^D \lambda_d \| \hat{M_d} - M_d(\theta) \|^2,
\end{equation}
where $\lambda_1,\ldots,\lambda_D$  are some predefined weights.

Model~\eqref{eq:model_1} was studied in~\cite{abbe2018multireference} when $P$ is the identity operator (no down-sampling)\footnote{The results of~\cite{abbe2018multireference} concern 1-D, but they can be readily extended to 2-D.}. 
In particular, it was shown that if the discrete Fourier transform of the signal is non-vanishing, and $\rho$ is almost any non-uniform distribution, then the signal and distribution are determined uniquely, up to an unavoidable translation symmetry, from the second moment of the observations.
While this is not necessarily true when $P$ is a down-sampling operator, we keep using the first two moments in this work (namely, $D=2$).

Specifically, the first two moments of the SR-MRA model~\eqref{eq:model_1} are given by~\cite{abas2021generalized}:
\begin{eqnarray}
    M_1 &=& PC_{x}\rho, \\
    M_2 &=& PC_{x}D_{\rho}C_{x}^TP^T + \sigma^2PP^T,
\end{eqnarray}
where $C_{x} \in \mathbb{R}^{\Lh^2 \times \Lh^2}$ is a block circulant with circulant blocks (BCCB)  matrix of the form:
\begin{equation}
C_{{x}}  = \begin{pmatrix}
c_0 & c_{\Lh-1} & \ldots & c_2 & c_1\\
c_1 & c_0 & c_{\Lh-1} & \ldots & c_2\\
\ldots & \ldots & \ldots & \ldots & \ldots\\
\ldots & \ldots & \ldots & \ldots & \ldots \\
c_{\Lh-1} & c_{\Lh-2} & \ldots & c_1 & c_0
\end{pmatrix},
\end{equation}
where the block $c_i\in \mathbb{R}^{\Lh\times \Lh}$ represents a circulant matrix of the $i$-th column of the image $x$.
Each column of $C_x$ contains a copy of the image after 2D translation and vectorization.
The matrix $D_{\rho} \in \mathbb{R}^{\Lh^2\times \Lh^2}$ is a diagonal matrix, whose diagonal is a vectorization of the matrix $\rho_{1}\rho_{2}^T$.
Therefore, the least squares objective~\eqref{eq:LS} can be explicitly written as 
\begin{eqnarray} \label{eq:LS_SR}
 \nonumber && \hspace{-0.3in} \min_{ \theta\in\Theta}  || PC_{{x}}D_{\rho}C_{{x}}^TP^T + \sigma^2PP^T -\hat{M_2}||_{\F}^2 \nonumber \\ && ~~~~~~~~~~~~~~~~~~ +\lambda||PC_{\rho}{x}-\hat{M_1}||_{2}^2.
\end{eqnarray}
In this paper, we set $\lambda=\frac{1}{\Lh^2(1+\sigma^2)}$, as suggested in~\cite{abbe2018multireference}.

\noindent {\bf Expectation-maximization (EM).} 
The log-likelihood of~\eqref{eq:model_1} is given by
\begin{equation}\label{eq:likelihood_EM}
    \log p(y_1,\ldots,y_n; \theta) =  \sum_{i=1}^N\log \sum_{s^1,s
    ^2=1}^{\Lh}\rho[s]e^{-\frac{1}{2\sigma^2}\|y_i -PR_{s}x \|_2^2}.
\end{equation}
A popular way to maximize the likelihood function is using the EM algorithm~\cite{dempster1977maximum}. 
EM can also be used to maximize the posterior distribution 
when an analytical prior is available. 

EM is an iterative algorithm, where each iteration consists of two steps. The first, called the E-step, computes the expected value of the log-likelihood function with respect to the translations, given the current estimate of $\theta:=(x,\rho)$.
In the $(t+1)$-th iteration, it reads 
\begin{equation} \label{eq:Q}
	\begin{split}
&Q(x,\rho | x_{t},\rho_t) = \\& \sum_{i=1}^{N}\sum_{s^1,s^2=1}^{\Lh}w_t^{i,s}    \bigg(\log{\rho[s]}-\frac{1}{2\sigma^2}||y_i-PR_{s}x||^2_\F\bigg),
	\end{split}
\end{equation}
where
\begin{equation}\label{eq:weights_EM}
    w_t^{i,s} = C_t^{i}{\rho_t}[s]e^{-\frac{1}{2\sigma^2}||y_i-PR_{s}x||^2_\F},
\end{equation}
and $C_t^{i}$ is a normalization factor so that $\sum_{s}w_t^{i,s}=1$. 
The next step, called the M-step, maximizes~\eqref{eq:Q} with respect to $x$ and $\rho$:
\begin{eqnarray}
(x_{t+1},\rho_{t+1}) = \argmax_{ x, \rho} Q(x,\rho | x_t,\rho_t).
\end{eqnarray} 
In our case, the maximum is attained by solving the linear system of equations:
\begin{equation} \label{eq:x_EM}
      Ax_{t+1} = b,
\end{equation}
where 
\begin{align*}
A &= \sum_{i=1}^{N}\sum_{s}w_t^{i,s}R_{s}^{-1}P^{-1}PR_{s},  \nonumber \\
b &= \sum_{i=1}^{N}\sum_{s}w_t^{i,s}R_{s}^{-1}P^{-1}y_i, \nonumber \\
\end{align*}
and
\begin{eqnarray}
\label{eq:rho_EM}
    \rho_{t+1}[s] &= \frac{ \sum_{i=1}^{N}w_k^{i,s}}{\sum_{i=1}^{N}\sum_{s'}w_k^{i,s'}}.
\end{eqnarray}

The EM algorithm iterates between the E and the M steps. 
While  each EM iteration does not decrease the likelihood, 
for the SR-MRA model, the likelihood function is not convex, and thus 
the EM algorithm is not guaranteed to achieve its maximum.
A common practice to improve estimation accuracy is either to initialize the EM iterations in the vicinity of the solution (if such an initial estimate is available) or to run it multiple times from different initializations, and choose the estimate corresponding to the maximal likelihood value. 
As we show in Section~\ref{sec:experiments}, our projected EM algorithm (described in the next section) provides consistent results even with a single initialization.

\noindent {\bf Denoiser-based priors.}
Using denoisers as priors was proposed in the Plug-and-Play approach \cite{venkatakrishnan2013plug}. 
It aims to maximize a posterior distribution $p(\theta|y_1,\ldots,y_n)$, which is equivalent to the optimization problem:
\begin{equation}
	 \argmin_{\theta\in\Theta} l(\theta)  + \s(\theta),
\end{equation}
where $l(\theta):=-\log{p(y_1,\ldots,y_n|\theta)}$, $\s(\theta):=-\log p(\theta)$, $p(\theta)$ is a prior on $\theta$, and 
recall that $\theta=(x,\rho)$. 
This work only considers a prior on the image $x$, but a prior on the distribution might be learned as well~\cite{janco2021accelerated}.

Plug-and-Play proceeds by variable splitting, 
\begin{equation}
\label{eq:opt_prob_constrained}
   \argmin_{\theta,v\in\Theta} l(\theta) + \beta \s(v) \quad 
    \text{subject to} \quad  \theta=v,
\end{equation}
where $\beta$ is a parameter that is manually tuned. 
The problem in \eqref{eq:opt_prob_constrained} can be solved using ADMM by iteratively applying the following three steps:
\begin{eqnarray}
\label{eq:admm_steps}
   \theta^{k+1} &=& \argmin_{\theta\in\Theta} l(\theta) + \frac{\lambda}{2} ||\theta - (v^k-u^k)||_2^2, \\ \nonumber
   v^{k+1} &=& \argmin_{v\in\Theta} \frac{\lambda}{2\beta} ||\theta^{k+1} + u^{k} - v||_2^2 + \s(v),
   \\ \nonumber
   u^{k+1} &=& u^{k} + (\theta^{k+1}-v^{k+1}),
\end{eqnarray}
where $\lambda$ is the ADMM penalty parameter.
The main insight here is that the second step aims to solve a Gaussian denoising problem, with a prior $\s(v)$ and 
standard deviation of $\sqrt{\frac{\beta}{\lambda}}$.
Instead of solving this problem 
for an analytical explicit prior, one can apply off-the-shelf denoisers, such as BM3D or neural nets.

While Plug-and-Play was used in many tasks, it requires tuning the parameters $\lambda, \beta$ and the number of iterations for the first step; these parameters significantly impact its performance.
In our setup, we did not find a set of parameters that yields accurate estimates for the SR-MRA problem in a wide range of scenarios. 
As an alternative, we propose 
a projection-based approach for EM and MoM. 

\section{Denoiser-based projected algorithms} \label{sec:projected_alg}

This section introduces the main contribution of this work: incorporating a projection-based denoiser into the MoM and EM, and their implementation for SR-MRA. 
Importantly, there is no agreed way of including a prior into MoM, nor a standard way of incorporating natural images prior to EM.
Projected MoM algorithm is presented in Algorithm~\ref{proj_MoM}, and projected EM in Algorithm~\ref{proj_EM}.

The main idea is rather simple: every few iterations of either MoM or EM, we apply a predefined denoiser that acts on the current estimate of the image. 
Specifically, for MoM, we apply the denoiser every few steps of BFGS (with line-search) that minimizes the least squares objective~\eqref{eq:LS_SR}; 
BFGS can be replaced by alternative gradient-based methods.
For EM, we apply the denoiser every few EM steps. 

As a denoiser, we chose to work with the celebrated BM3D denoiser~\cite{dabov2007image} because of its simplicity and effectiveness for denoising natural images. 
However, replacing BM3D with alternative denoisers is straightforward. 
This might be especially relevant when dealing with specific applications, where a data-driven deep neural net may take the role of the denoiser. In Section~\ref{sec:conclusion}, we discuss such potential applications for cryo-EM data processing. 

Our algorithm requires only two parameters to be adjusted. The first parameter is the noise level of the denoiser, denoted hereafter by $\sigbm$. 
A low noise level increases the weight of the data (either likelihood or observed moments), while increasing the noise level strengthens the effect of the denoiser.
In particular, we found that starting from a high noise level and gradually decreasing it with iterations leads to consistent and accurate estimations. 
Specifically, in the experiments of Section~\ref{sec:experiments}, we set the noise level to 
\begin{eqnarray} \label{eq:decaying_function}
    \sigbm_t = 2^{-1/10}{\sigbm_{t-1}}, \quad \sigbm_1=1,
\end{eqnarray}
where $\sigbm_t $ is the noise level of the $t$-th application of BM3D.  
 
The second parameter, which we denote by $F$, determines the number of EM iterations or BFGS iterations (or other gradient-based algorithms) for MoM between consecutive applications of the denoiser.
Small or large $F$ may put too much or too little weight on the prior (rather than on the data).
Empirically, projected MoM is quite sensitive to this parameter and requires carefully tuning $F$, as presented in Section~\ref{sec:experiments},  whereas EM is less sensitive.
In contrast to Plug-and-Play, which was very sensitive to parameter tuning in our tests and did not converge well, we found it to be quite easy to tune the parameters $F$ and $\sigma$ and get consistent results for a wide range of images and parameters. 
The simplicity of parameter tuning, perhaps, stems from the fact that each parameter affects only one term ($F$ the fidelity, and $\sigbm$ the prior), 
whereas tuning the $\lambda$ parameter in the Plug-and-Play approach~$\eqref{eq:admm_steps}$ affects both terms. 

\begin{algorithm}[t]
    \textbf{Input}: Measurements $y_1,\ldots,y_N$, denoiser $\mathcal{D}$, and the parameter $F$\\ 
         \textbf{Output}: An estimate of the target image and  distribution

    \begin{enumerate}
        \item     Compute the empirical moments $\hat M_1, \hat M_2$ according to~\eqref{eq:empiric_MOM} 
        \item Until a stopping criterion is met:
        \begin{enumerate}
           \item Run $F$ BFGS (or alternative gradient-based algorithm) steps to minimize~\eqref{eq:LS_SR}
           \item Apply the denoiser $x \gets \mathcal{D}(x, \sigbm$) 
         \item Update $\sigbm$ according to~\eqref{eq:decaying_function}
        \end{enumerate}  
    \end{enumerate}
    \caption{Projected MoM}\label{proj_MoM}
\end{algorithm}

\begin{algorithm}[t]
    \textbf{Input}: Measurements $y_1,\ldots,y_N$, denoiser $\mathcal{D}$, and the parameter $F$
    
         \textbf{Output}: An estimate of the target image and  distribution

    Until a stopping criterion is met:
    
           \begin{enumerate}
           \item Run $F$ EM steps according to~\eqref{eq:weights_EM},~\eqref{eq:x_EM}, and ~\eqref{eq:rho_EM}.
           \item Apply the denoiser  $x \gets \mathcal{D}(x, \sigbm)$
         \item Update $\sigbm$  according to~\eqref{eq:decaying_function}
        \end{enumerate}  
\caption{Projected EM}
    \label{proj_EM}
\end{algorithm}

\section{Convergence analysis}
\label{sec:Convergence}

We turn to analyze the convergence of the proposed algorithms.
We start with a couple of definitions and assumptions that will be used throughout the analysis. 
Then, we provide convergence analysis for projected MoM in a somewhat modified scheme, where the (nonconvex) MoM objective is minimized by a gradient descent step rather than BFGS iterations (which are essentially preconditioned gradient steps). 
Lastly, to tackle frameworks that are not based on any gradient method, we present a more general operator-based analysis. While some assumptions in this analysis are not guaranteed for EM, it provides mathematical reasoning for the convergence of frameworks that are composed of two ``black-box" operators that generate convergent sequences, each on its own, which is relevant to EM. 

\begin{definition}
We say that the map $\mathcal{M}:\mathcal{X} \to \mathcal{X}$ is $C$-Lipschitz in $\mathcal{X}$ if for all $x,z \in \mathcal{X}$ we have 
$$
\|\mathcal{M}(x) - \mathcal{M}(z)\|_2 \leq C \| x-z\|_2.
$$
In particular, if $C=1$ then $\mathcal{M}$ is nonexpansive in $\mathcal{X}$ and if $C<1$ then $\mathcal{M}$ is a contraction in $\mathcal{X}$.
\end{definition}

\begin{definition}
\label{def:prox}
The proximal operator associated with a function $\s(\cdot)$ is defined by
$$
\mathrm{Prox}_{\s}(x) := \argmin_z \frac{1}{2}\|x-z\|_2^2 + \s(z).
$$
\end{definition}
Observe the tight connection between the proximal operator and Gaussian denoising.
Specifically, Gaussian denoiser $\mathcal{D}(\cdot;\sigma)$ that is associated (perhaps implicitly) with a prior function $\s(\cdot)$ is typically obtained by an optimization problem similar to $\mathrm{Prox}_{\sigma^2 \s}(x)$.

The proximal operator has been originally explored 
under the assumption that $\s$ is a 
lower semi-continuous (l.s.c.) convex function~\cite{moreau1965proximite}, for which it has been shown \cite[Prop.~1]{gribonval2020characterization} that: 1) it is a subdifferential of a convex function and 2) it is $1$-Lipschitz, i.e., nonexpansive. These properties are important for guaranteeing the convergence of algorithms that use it.
A recent work \cite{gribonval2020characterization} has extended the analysis to nonconvex $\s$ (which may be more similar to practical denoisers), but these results do not imply nonexpansiveness without additional assumptions on the proximal operator. 

Throughout this section, we make the following assumptions.

\begin{assump}
\label{ass:rho}
$\rho$ (the distribution parameters) is fixed.
\end{assump}

\begin{assump}
\label{ass:denoiser}
$\mathcal{D}$ is a proximal operator associated with some proper l.s.c. convex function.
\end{assump}

Note that the data-terms that we consider (namely, the log-likelihood and the MoM objectives) are challenging nonconvex functions.
They remain nonconvex even if $\rho$, which parameterizes the signal's distribution, is known or fixed. Similarly, the optimization problems remain nonconvex even when making the above simplifying assumption on the denoiser.
Nevertheless, note that our approach can be readily applied with denoisers that are based on TV regularization or on $\ell_1$-norm prior, which satisfy \ref{ass:denoiser}.

The two iterative estimation schemes in this paper can be expressed as $x_{k+1}=\mathcal{D}(\mathcal{T}(x_k))$, where the denoiser $\mathcal{D}$ follows an operation $\mathcal{T}$ that reduces some nonconvex 
data-fidelity term: 
the first scheme reduces the MoM objective with a gradient-based method and the second scheme reduces the negative log-likelihood via the EM method. 
We now turn to analyze two such schemes.

\subsection{Convergence of projected MoM}

Under \ref{ass:rho} and \ref{ass:denoiser}, 
let us provide convergence results for the iterative scheme 
\begin{align}
\label{eq:pgm}
    x_{k+1} = \mathcal{D} (x_k - \mu \nabla f (x_k)),
\end{align}
where $f$ denotes the MoM objective 
and $\mu$ is the step-size.
Notice that $f$ is smooth but nonconvex.
Compared to Algorithm~\ref{proj_MoM}, the main difference is merely replacing the BFGS steps with gradient steps.

To establish a convergence statement for \eqref{eq:pgm}, we will exploit convergence results of the proximal gradient method \cite{beck2009gradient,beck2017first}. 
To invoke them, we show that the MoM has a Lipschitz gradient.
The following theorem shows that this is the case when the optimization is restricted to a closed domain~$\mathcal{X}$. This assumption is reasonable when $x$ is an image, as typically its pixels are restricted to the range $[0, 255]$.

\begin{theorem}
Assume that $\mathcal{X}$ is a closed convex set. 
Let $\rho$ satisfy \ref{ass:rho} and $\mathcal{D}$ satisfy \ref{ass:denoiser} with respect to a function $\mu \s(\cdot)$, with $\mu\in \mathbb{R}$ and domain $\mathcal{X}$.
Consider the nonconvex MoM objective 
\begin{eqnarray}
&& \hspace{-0.5in} f(x)=\|PC_{{x}}D_{\rho}C_{{x}}^TP^T + \sigma^2PP^T -\hat{M_2}\|_{\F}^2  \\ \nonumber && +\lambda\|PC_{\rho}{x}-\hat{M_1}\|_{2}^2.
\end{eqnarray}
Then, there exists $\mu>0$ such that the sequence $\{ x_k \}$ in~\eqref{eq:pgm} obeys:
(1) the sequence $\| x_k-\mathcal{D} (x_k - \mu \nabla f (x_k)) \|_2$ converges to zero, and (2) any limit of $\{ x_k \}$
is a stationary point of $f+\s$.
\end{theorem}

\noindent \emph{Proof.}
We need to show that, restricted to $\mathcal{X}$, $\nabla f$ is $C$-Lipschitz. Then by setting $\mu \in (0,\frac{2}{C})$ 
statements (1) and (2) in the theorem follow
from \cite[Theorem 10.15]{beck2017first}. 

Note that $f(x)$ is infinitely differentiable.
The Lagrange remainder (a consequence of the mean value theorem) of first-order Taylor's series of $\nabla f$ implies that for any $x,z \in \mathcal{X}$ there exists $\xi$ on the line that connects $x$ and $z$ for which $\nabla f(x) = \nabla f(z) + \nabla^2 f(\xi) (x - z)$. Therefore,
$$
\|\nabla f(x) - \nabla f(z)\|_2 \leq \max_{\xi \in \mathcal{X}} \|\nabla^2 f(\xi) \|_{op} \| x - z \|_2,
$$
where $\|\cdot\|_{op}$ is the operator norm. As $f$ is a polynomial of order 4, the entries of $\nabla^2 f$ are polynomials of order 2 and are bounded on the closed set $\mathcal{X}$. Thus, the operator norm (magnitude of the largest eigenvalue of $\nabla^2 f$) is bounded on $\mathcal{X}$ due to the Gershgorin circle theorem. Denoting this finite value by $C$, we get that $\nabla f(x)$ is $C$-Lipschitz on $\mathcal{X}$.
\qed

Note that our statement
ensures that the limit is only
a stationary point of the 
objective $f+\s$, and not to a global minimum. This is inevitable, as $f$ is a nonconvex function.
Extensions to backtracking line search, instead of fixed step size, can be adopted from~\cite{beck2009gradient,beck2017first}.

\subsection{Composition of convergence inducing operators}

Unlike projected MoM, in projected EM the data-fidelity step is not a gradient step.
It is well known that, on its own, the EM algorithm $x_{k+1} = \mathcal{T}(x_k)$ converges to a local minimum of the negative log-likelihood function.
However, to the best of our knowledge, EM convergence behavior is not well understood, besides a few simple models. 
This motivates us to consider a framework of a composition of two ``black-box" operators that generate convergent sequences when being applied alone. 
Indeed, 
it is also known that the sequence $x_{k+1} = \mathcal{D}(x_k)$, 
under \ref{ass:denoiser},
converges to a minimizer of its associated function $\s$ (as a special case of the proximal gradient method on $0+\s$~\cite{beck2009gradient,beck2017first}).
Furthermore, a gradient-step operator with a sufficiently small step size on the MoM objective also yields a convergent sequence.
Therefore, the analysis we provide for EM may also apply to the MoM.

Denote by $\mathrm{Fix}(\mathcal{M})$ the set of fixed points of an operator $\mathcal{M}$, i.e., $\mathrm{Fix}(\mathcal{M}) = \{ x: \mathcal{M}(x)=x \}$. 
As discussed above, both the data-fidelity operator $\mathcal{T}$ and the denoising operator $\mathcal{D}$ have fixed points associated with stationary points of a data-fidelity term $f$ and an implicit prior term $\s$, respectively.

Let us now define a class of operators that strictly attract points towards their set of fixed points, as done in~\cite{yamada2005hybrid}.

\begin{definition}
We say that the map $\mathcal{M}:\mathcal{X} \to \mathcal{X}$ is 
$\gamma$-strongly quasi-nonexpansive in $\mathcal{X}$ with $\gamma>0$
if for all $x \in \mathcal{X}$ and $z \in \mathrm{Fix}(\mathcal{M})$ 
we have 
$$
\|\mathcal{M}(x) - z\|_2^2 \leq \| x-z\|_2^2 - \gamma \|\mathcal{M}(x) - x \|_2^2.
$$
\end{definition}

Note that this class includes the set of proximal operators (see \cite[Fact~1c]{yamada2005hybrid}). Thus, it is a weaker assumption.
Even though it is not guaranteed that the data-fidelity operator $\mathcal{T}$ that is used in EM falls into this class for the entire domain, it might be reasonable to restrict the domain, where it holds to be a small set $\mathcal{X}$ around a fixed point. 
The following theorem shows how this property implies the convergence of the sequence $x_{k+1} = \mathcal{D} (\mathcal{T} (x_k))$ to a fixed point.

\begin{theorem}
Let $\mathcal{D}$ satisfy \ref{ass:denoiser} with domain $\mathcal{X}$
and let $\mathcal{T}$ be $\gamma$-strongly quasi-nonexpansive in $\mathcal{X}$ such that $\mathrm{Fix}(\mathcal{D}) \cap \mathrm{Fix}(\mathcal{T}) \neq \emptyset$.
Then, the sequence
$x_{k+1} = \mathcal{D} (\mathcal{T} (x_k))$
converges to a fixed point $x_* = \mathcal{D} (\mathcal{T}(x_*))$. 
\end{theorem}

\noindent \emph{Proof.}
As a proximal operator associated with a convex function, $\mathcal{D}$ is $1/2$-average of a nonexpansive operator \cite[Theorem 6.42]{beck2017first},\cite[Lemma 2.1]{combettes2004solving}. Thus, from \cite[Fact~1c]{yamada2005hybrid}, it is strongly quasi-nonexpansive. From \cite[Prop.~1d]{yamada2005hybrid} the composition of strongly quasi-nonexpansive operators  $\mathcal{M}=\mathcal{D} \circ \mathcal{T}$
with $\mathrm{Fix}(\mathcal{D}) \cap \mathrm{Fix}(\mathcal{T})$ 
is strongly quasi-nonexpansive in $\mathcal{X}$ with some $\tilde{\gamma}>0$
and $\mathrm{Fix}(\mathcal{M}) = \mathrm{Fix}(\mathcal{D}) \cap \mathrm{Fix}(\mathcal{T})$.
Let $\tilde{x}_* \in \mathrm{Fix}(\mathcal{M})$, then 
\begin{eqnarray*}
&& \|x_{k+1} - \tilde{x}_*\|_2^2 
\leq \| x_k - \tilde{x}_*\|_2^2 - \gamma \|x_{k+1} - x_k \|_2^2 \\    
& \Rightarrow & \|x_{k+1} - x_k \|_2^2 \leq \frac{1}{\gamma} (\| x_k - \tilde{x}_*\|_2^2 - \|x_{k+1} - \tilde{x}_*\|_2^2).
\end{eqnarray*}
Summing over $k$ from $0$ to $K$ we have
\begin{align*}
\sum_{k=0}^{K} \|x_{k+1} - x_k \|_2^2 &\leq \frac{1}{\gamma} ( \| x_0 - \tilde{x}_*\|_2^2 - \|x_{K+1} - \tilde{x}_*\|_2^2 ) \\ \nonumber
&\leq \frac{1}{\gamma} \| x_0 - \tilde{x}_*\|_2^2.
\end{align*}
By taking $K$ to infinity, we see that $\{ x_k \}$ is a Cauchy sequence, and thus it converges to some limit point $x_*$, i.e., $x_k \rightarrow x_*$. Consequently, this limit point is a fixed-point:
$x_* = \mathcal{D} (\mathcal{T}(x_*))$.
\qed

The above result, despite including assumptions that are not guaranteed for EM on the entire domain, provides mathematical reasoning for the convergence of composition of a denoiser and a data-fitting operator, when each obeys convergence properties alone and both can agree on a fixed point (potentially, out of many fixed points that fit the measurements but can badly estimate the true signal). Note that the theorem assumption $\mathrm{Fix}(\mathcal{D}) \cap \mathrm{Fix}(\mathcal{T}) \neq \emptyset$ is reasonable, as we empirically show that projected EM converges to a noiseless point 
that agrees with the measurements.

\begin{figure}[t]
	\centering
	\includegraphics[width=\linewidth]{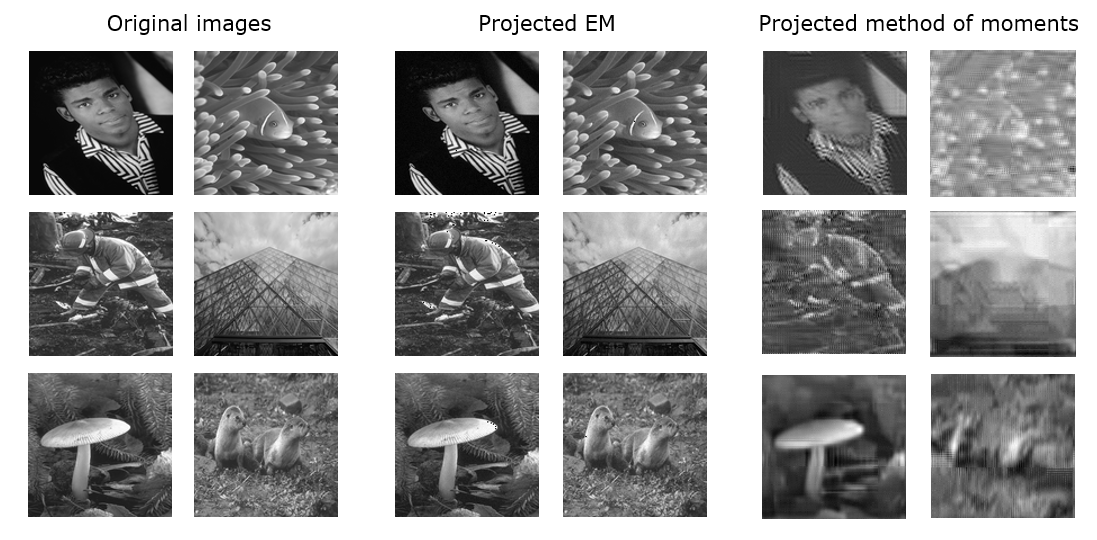}
	\caption{
		A gallery of 6 examples from the CBSD-68 dataset, and the corresponding recoveries using projected EM in a high-SNR regime ($\sigma=1/8$) and projected MoM with accurate moments ($N\gg \sigma^4$); the images were down-sampled by a factor of $K=2$. It can be seen that projected EM outperforms projected MoM. However, as we show in Table~\ref{tbl:MoMvsPMoM}, the computational load of projected EM is much heavier.
	}
	\label{fig:cbsd_images}
\end{figure}

\section{Numerical experiments}
\label{sec:experiments}

We applied the proposed algorithms
to the 68 ``natural" images of the CBSD-68 dataset and a cryo-EM image of the E. coli 70S ribosome, 
available at the Electron Microscopy Data Bank\footnote{\url{https://www.ebi.ac.uk/emdb/}}.
We set $\Lh=128$, and the images were down-sampled to $\Ll=64,32,16,8$ (namely, down-sampling factors of $K=2,4,6,8,$ respectively).

For all experiments, we used the BM3D denoiser~\cite{dabov2007image}, where the noise level decays as in~\eqref{eq:decaying_function}. 
Unless specified otherwise, we set $F=5$ for both algorithms. 
For the MoM, we minimized the least squares objective~\eqref{eq:LS} using the BFGS algorithm with line-search~\cite{bendory2022multi,kreymer2022two}.
The distributions $\rho_1, \rho_2$ and their initial guesses were drawn from a uniform distribution on $[0,1]$, and normalized so that $\rho_1, \rho_2 \in \bigtriangleup^{\Lh}$. 
The pixel values of all images are in $[0,1]$.
Each pixel in the initialization of the image estimate, for both algorithms,  was drawn i.i.d.\ from a uniform distribution on $[0,1]$,  which was then normalized to include the entire $[0,1]$ range.
The algorithms were initialized with a single initial guess; considering more initializations did not make a significant effect on the results.

Due to the shift-invariance of~\eqref{eq:model_1}, we measure the error by:
\begin{equation}
    \text{error}(\hx) = \min_s
    \frac{||R_{s}\hx-x||_\F}{||x||_\F},
\end{equation}
where $\hx$ is the estimated image, and $R_s$ denotes a 2D translation. 
We define SNR as 
\begin{equation}
	\text{SNR}= \frac{||x||_\text{F}^2}{\Lh^2\sigma^2}.
\end{equation}

Our algorithms were implemented in Torch.
The recovery of a high-dimensional signal requires an efficient implementation of the algorithms over GPUs.
Actually, without this implementation, we were unable to recover images of size $\Lh\geq32$ within a reasonable running time, where reducing the resolution beyond that adversely affects denoising performance. The code to reproduce all experiments is available at \url{https://github.com/JonathanShani/Denoiser_projection}.

\begin{table}
	\caption{
	Comparing projected MoM and EM for various target resolutions $\Lh$ of the Lena image, observed resolutions $\Ll$, observations number $N$ and $\sigma$. 
	}
	\label{tbl:MoMvsPMoM}
	\begin{center}
		\begin{tabular}{||c c c c || c c || c c||} 
		    \hline
		    \multicolumn{4}{||c||}{Model parameters}&\multicolumn{2}{|c||}{Projected MoM}&\multicolumn{2}{|c||}{Projected EM} \\
			\hline
			$\Lh$ & $\Ll$ & $N$ & $\sigma$ & Error & Time & Error & Time  \\

			\hline\hline
			
			$128$ & $32$ & $100$ & $0.125$ & $0.332$ & $\mathbf{266.5421}$ & $\mathbf{0.0991}$ & $ 871.6759 $ \\ 
			\hline 
			$128$ & $32$ & $100$ & $0.25$ & $0.3459$ & $ \mathbf{207.2537} $ & $\mathbf{0.196}$ & $ 861.4668 $ \\ 
			\hline 
			$128$ & $32$ & $100$ & $0.5$ & $\mathbf{0.356}$ & $\mathbf{120.2959}$ & $0.3836$ & $ 837.9343 $ \\ 
			\hline 
			$128$ & $32$ & $10000$ & $0.125$ & $0.339$ & $\mathbf{261.8394} $ & $\mathbf{0.0311}$ & $ 1593.9534 $ \\ 
			\hline 
			$128$ & $32$ & $10000$ & $0.25$ & $0.3302$ & $ \mathbf{280.6154} $ & $\mathbf{0.0623}$ & $ 1668.756 $ \\ 
			\hline 
			$128$ & $32$ & $10000$ & $0.5$ & $0.3301$ & $ \mathbf{277.3122} $ & $\mathbf{0.1387}$ & $ 1660.7012 $ \\ 
			\hline 
			$128$ & $16$ & $100$ & $0.125$ & $0.3627$ & $ \mathbf{85.716} $ & $\mathbf{0.2475}$ & $ 919.5462 $ \\ 
			\hline 
			$128$ & $16$ & $100$ & $0.25$ & $\mathbf{0.411}$ & $ \mathbf{84.4553} $ & $0.4154$ & $ 984.6871 $ \\ 
			\hline 
			$128$ & $16$ & $100$ & $0.5$ & $\mathbf{0.3714}$ & $ \mathbf{77.8746} $ & $0.7497$ & $ 893.2808 $ \\ 
			\hline 
			$128$ & $16$ & $10000$ & $0.125$ & $0.3874$ & $ \mathbf{122.5611} $ & $\mathbf{0.0685}$ & $ 3769.9855 $ \\ 
			\hline 
			$128$ & $16$ & $10000$ & $0.25$ & $0.381$ & $ \mathbf{113.7814} $ & $\mathbf{0.141}$ & $ 3179.0298 $ \\ 
			\hline 
			$128$ & $16$ & $10000$ & $0.5$ & $0.3652$ & $ \mathbf{94.8278} $ & $\mathbf{0.3001}$ & $ 3148.3869 $ \\ 
			\hline 
			$64$ & $32$ & $100$ & $0.125$ & $0.3319$ & $ 103.8799 $ & $\mathbf{0.0494}$ & $ \mathbf{100.9239} $ \\ 
			\hline 
			$64$ & $32$ & $100$ & $0.25$ & $0.3427$ & $ \mathbf{47.959} $ & $\mathbf{0.0988}$ & $ 99.215 $ \\ 
			\hline 
			$64$ & $32$ & $100$ & $0.5$ & $0.3439$ & $ \mathbf{44.0597} $ & $\mathbf{0.2111}$ & $ 87.1056 $ \\ 
			\hline 
			$64$ & $32$ & $10000$ & $0.125$ & $0.2562$ & $ \mathbf{101.4311} $ & $\mathbf{0.0158}$ & $ 325.7816 $ \\
			\hline 
			$64$ & $32$ & $10000$ & $0.25$ & $0.3382$ & $ \mathbf{100.1693} $ & $\mathbf{0.0315}$ & $ 264.8022 $ \\ 
			\hline 
			$64$ & $32$ & $10000$ & $0.5$ & $0.3349$ & $ \mathbf{98.7658} $ & $\mathbf{0.0647}$ & $ 276.1571 $  \\

			\hline
		\end{tabular}
	\end{center}
	\vspace{-0.2in}
\end{table}

\noindent {\bf Visual examples.}
Figure~\ref{fig:cbsd_images} presents a collection of visual recoveries from the CBSD-68 dataset using both algorithms, with a  down-sampling factor of $K=2$. 
For projected MoM, we assumed to have access to the population (exact) moments 
(which is the case when $N\gg\sigma^4$), and projected EM used $N=10^4$ observations with a noise level of $\sigma=1/8$ (corresponding to SNR$\approx$15). Evidently, projected EM provides more accurate recoveries, although MoM uses exact moments.  
However, as we show next, the computational load of projected EM is much heavier. 
Figure~\ref{fig:proj_em_visual} presents recoveries of additional images using projected EM with sampling factors of $K=2,4,8$, $N=10^4$ observations and noise level of $\sigma=1/8$. 
Remarkably, projected EM provides accurate estimates with $K=8$, i.e., when the number of pixels is reduced from $128^2$ to $16^2$.

\noindent {\bf Quantitative comparisons.}
We compared projected EM and projected MoM in terms of estimation error and run-time.
While the reported running times depend on implementation and hardware, 
the goal is to show the general trend.

Table~\ref{tbl:MoMvsPMoM} presents the error and running time, averaged over 20 trials, for each combination of the parameters $\Lh$, $\Ll$, $N$ and $\sigma$.
We used the Lena image of size $\Lh=128$ with down-sampling factors of $K=4,8$, and of size $\Lh=64$ with $K=2$; the number of observations was set to $N=10^2,10^4$, and the noise level to $\sigma=0.125,0.25,0.5$ (corresponding to SNR$\approx$17, 4, 1, respectively).
Both algorithms were limited to $100$ iterations per trial.
For each trial, we drew a fresh set of observations based on a new distribution and initial guesses.
Projected EM shows a clear advantage for almost all values of $N$ and $\sigma$ in terms of estimation error (besides some cases when $\sigma$ is large and $N$ is small).
However, the computational burden of EM is much heavier; there are cases (e.g., $\Lh=128, N=10^4,\sigma=0.25$) where the running time of projected EM algorithm is $30$ times longer than the running time of projected MoM. 
The reason is that EM iterates over all observations, whereas MoM requires only a single pass over the observations~\cite{boumal2018heterogeneous}, 
while the dimension of the variables in the least squares objective~\eqref{eq:LS_SR} is proportional to the dimension of the image.

\noindent {\bf Error as a function of the iterations.}
We next examined the behavior of the algorithms as a function of their iterations, averaged over all images in the CBSD-68 dataset of size $\Lh=128$. 
Projected MoM used exact moments, and projected EM used $N=10^4$ observations and a noise level of $\sigma=1/8$.
Figure~\ref{fig:convergence_graphs} presents results for a variety of sampling factors.
As can be seen, the algorithms with projections significantly outperform the unprojected versions for all down-sampling factors. Notably, the projected algorithms provide non-trivial estimates even for $K=16$ (when each measurement is down-sampled from $128\times 128$ 
to $8\times 8$).  
We note that the denoiser leads to a locally non-monotonic error.

\begin{figure}
	\centering
	\includegraphics[width=0.5\textwidth]{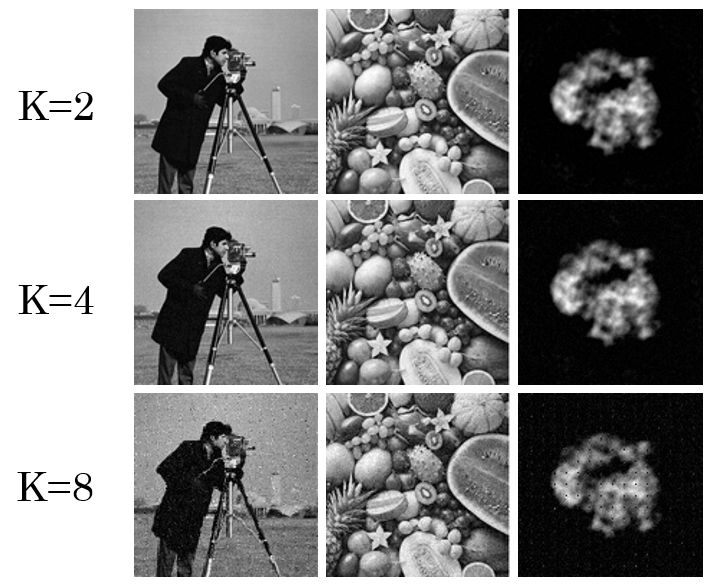}
	\caption{Recoveries using the projected EM algorithm with noise level of $\sigma=1/8$ and $N=10^4$ observations. The down-sampling factors are $K=2$ (top row), $K=4$ (middle row), and $K=8$ (bottom row). 
	}
	\label{fig:proj_em_visual}
\end{figure}

\begin{figure}[t]
    \centering
    \includegraphics[width=0.3\textwidth]{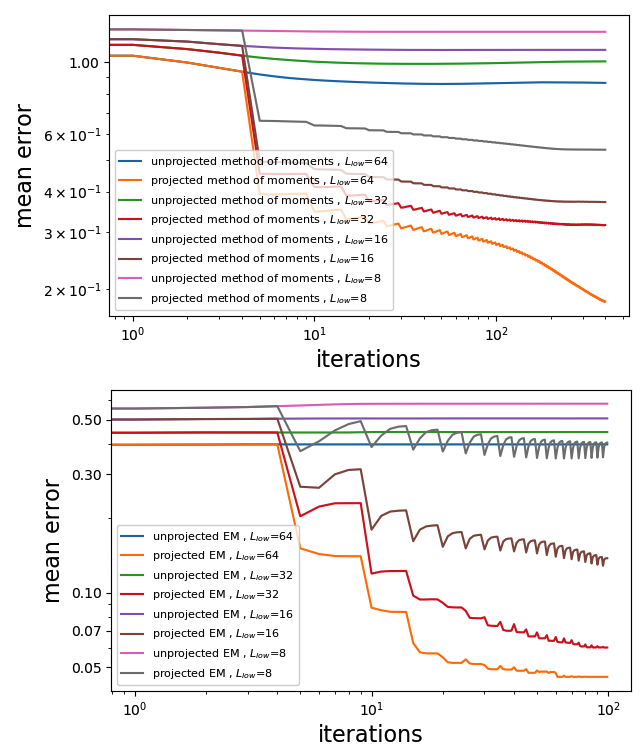}
    \caption{
    Error as a function of the iteration (averaged over all images in the CBSD-68 dataset, each of size $\Lh=128$), for different sampling factors. Top: projected MoM and the unprojected MoM, based on the perfect moments (corresponding to $N\to\infty$). Bottom:
    projected and unprojected EM, with $N=10^4$ observations and noise level of $\sigma=1/8$.
    }
    \label{fig:convergence_graphs}
\end{figure}

\noindent {\bf Projected MoM performance as a function of SNR.}
\label{sec:SNR_experiment}
Figure~\ref{fig:SNR_vs_error} (left panel) shows the average recovery error of projected MoM as a function of the SNR.  These experiments require averaging over multiple trials, which makes it virtually impossible to produce results for EM in a reasonable running time.
All experiments were conducted with the Lena image of size $\Lh=32$, with a sampling factor of $K=2$, and $N=10^5$ observations. 
For each SNR value, we conducted 100 trials, each with a fresh distribution of translations.
As can be seen, the error decreases monotonically with the SNR. 
Remarkably, the slope of the curve increases as the SNR decreases, hinting that the sample complexity of the problem (the number of observations required to achieve a desired accuracy) increases in the low SNR regime; this is a known phenomenon in the MRA literature for models with no projection \cite{bendory2017bispectrum,abbe2018multireference,perry2019sample}.

\begin{figure}[t]
    \centering
    \includegraphics[width=0.24\textwidth]{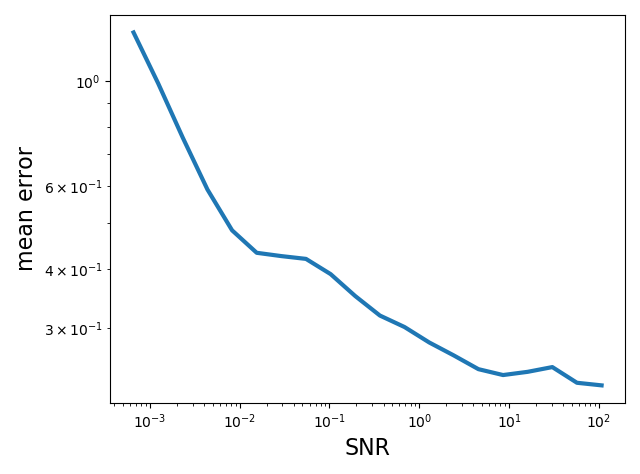}
    \includegraphics[width=0.24\textwidth]{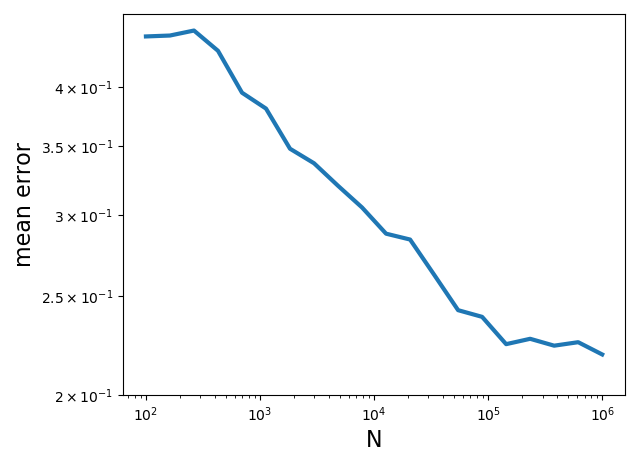}
    \caption{
    Mean error of projected MoM as a function of SNR with $N=10^5$ observations (left) and number of observations with $\sigma=1/8$ (right). The image size is $\Lh=32$  and the down-sampling factor is $K=2$.
    }
    \label{fig:SNR_vs_error}
\end{figure}

\noindent {\bf Projected MoM performance as a function of the number of measurements.}
By the law of large numbers and according to~\eqref{eq:moments_convergence}, for any fixed SNR, the empirical moments almost surely converge to the analytic moments. 
To assess the impact of the number of observations on the estimation error, we used the image of Lena of size $\Lh=32$, down-sampling factor of $K=2$, and noise level of $1/8$, corresponding to $\text{SNR}\approx{17}$.

Figure~\ref{fig:SNR_vs_error} (right panel) presents the average error, over 100 trials, as a function of $N$. For each trial, we drew a fresh distribution.
The error indeed decreases with $N$, as expected, but the slope is smaller than $-1/2$ (in logarithmic scale), as the law of large numbers predicts. 
This is due to the effect of the prior (manifested as a denoiser in our case), which does not depend on the observations (the data).

\noindent {\bf How frequently should we apply the denoiser?}
An important parameter of the proposed algorithms, denoted by $F$, determines after how many gradient steps (or EM iterations for projected EM) we apply the denoiser. 
For example, $F=1$ means that we apply the denoiser after each gradient step, whereas $F=100$ means that we run 100 gradient steps before applying the denoiser. 
This is especially important since the computational complexity of a single gradient step is significantly heavier than a BM3D application. 
To study the effect of this parameter for projected MoM, we used the image of Lena of size $\Lh=128$, a down-sampling factor of $K=2$, and assumed to have access to the population (perfect) moments. 

Figure~\ref{fig:F_influence} shows the average error (over 10 trials) of MoM for different values of $F$; each trial was conducted with a fresh distribution.
Plainly, the impact of $F$ on the performance is significant. 
The optimal value seems to be around $F=10$ in this case;  larger and smaller values of  $F$ lead to sub-optimal results.

\begin{figure}[t]
    \centering
    \includegraphics[width=0.3\textwidth]{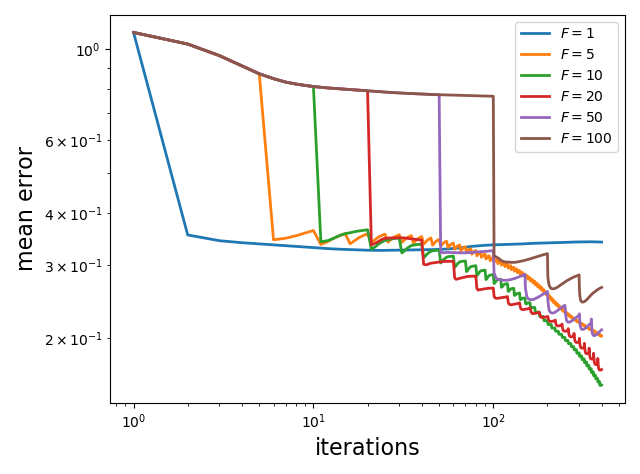}
    \caption{
    Mean error of projected MoM as a function of the iteration, for different values of $F$ (number of gradient steps between consecutive denoising operations).  The experiments were conducted on an image of size $\Lh=128$, assuming to have access to the population moments, and with a down-sampling factor of $K=2$.
The optimal value seems to be around $F=10$;  larger or smaller values of  $F$ lead to sub-optimal results.
    }
    \label{fig:F_influence}
\end{figure}

\section{Conclusion}
\label{sec:conclusion}
Single-particle cryo-EM is an increasingly popular technology for high-resolution structure determination of biological molecules~\cite{bendory2020single,Zhao2023Computational}.
The cryo-EM reconstruction problem  is to estimate a 3-D structure (the electrostatic  potential map of the molecule of interest) from multiple observations of the form
$y = T R_\omega X + \e$, 
 where $X$ is the 3-D structure to be recovered, $R_\omega$ represents an unknown 3-D rotation by some angle $\omega\in SO(3)$, and $T$ is a fixed, linear tomographic projection. The noise level is typically very high. 
The SR-MRA problem~\eqref{eq:model_1} can be interpreted as a toy model of the cryo-EM problem, where the image plays the role of the 3-D structure, and the 2-D translations and the sampling operator $P$ replace, respectively, the unknown 3-D rotations and the tomographic projection.

The standard reconstruction algorithms in cryo-EM are based on EM~\cite{scheres2012relion,punjani2017cryosparc}, aiming to maximize the posterior distribution based on analytical, explicit priors. 
Recently, techniques based on MoM were also designed for quick ab initio modeling~\cite{levin20183d,sharon2020method}. 
The techniques proposed in this paper can be readily integrated into these schemes, which may lead to improved accuracy and acceleration.
A similar idea was recently suggested by~\cite{kimanius2021exploiting}, who used
the regularization by denoising technique~\cite{RED}, and showed recoveries of moderate resolution with simulated data.

Current cryo-EM technology cannot be used to reconstruct small molecular structures (below $\sim 40$ kDa): this is one of its main drawbacks~\cite{henderson1995potential}.
Recent papers suggested overcoming this barrier using 
 variations of the MoM and EM ~\cite{bendory2018toward,kreymer2023stochastic,zabatani2024score}.
However, as the problem is ill-conditioned, the resolution of the recovered structures is not high enough.   
We hope that modifications of the algorithms suggested in this paper, where the denoiser is replaced with a data-driven
projection operator can be used to increase the resolution of the reconstructions and thus will allow elucidating multiple new structures of small biological molecules.

\rev{In our proposed scheme, the data fidelity (likelihood) term is computationally expensive due to the large number of measurements ($N > 100$) involved. A possible approach to effectively reduce the runtime is using online strategies, such as the one used in the online Plug-and-Play \cite{Sun2019Online}.}
Note also that in this work we use BM3D as prior in our proposed framework. \rev{Yet, one may replace BM3D with a learned denoiser to further boost the performance, e.g., using a deep neural network. We defer this extension to future work.}

\section*{Acknowledgment}
	T.B. is partially supported by the BSF grant no. 2020159, 
	NSF-BSF grant no. 2019752, and ISF grant no. 1924/21. R.G. is partially supported by the ERC-StG (No. 757497) and KLA grants. 
 T.T. is partially supported by ISF grant no. 1940/23.

\bibliographystyle{plain}
%\bibliography{egbib}

\begin{thebibliography}{10}

\bibitem{abas2021generalized}
Asaf Abas, Tamir Bendory, and Nir Sharon.
\newblock The generalized method of moments for multi-reference alignment.
\newblock {\em IEEE Transactions on Signal Processing}, 70:1377--1388, 2022.

\bibitem{abbe2018multireference}
Emmanuel Abbe, Tamir Bendory, William Leeb, Jo{\~a}o~M Pereira, Nir Sharon, and
  Amit Singer.
\newblock Multireference alignment is easier with an aperiodic translation
  distribution.
\newblock {\em IEEE Transactions on Information Theory}, 65(6):3565--3584,
  2018.

\bibitem{bandeira2017estimation}
Afonso~S Bandeira, Ben Blum-Smith, Joe Kileel, Jonathan Niles-Weed, Amelia
  Perry, and Alexander~S Wein.
\newblock Estimation under group actions: recovering orbits from invariants.
\newblock {\em Applied and Computational Harmonic Analysis}, 66:236--319, 2023.

\bibitem{bandeira2015non}
Afonso~S Bandeira, Yutong Chen, Roy~R Lederman, and Amit Singer.
\newblock Non-unique games over compact groups and orientation estimation in
  cryo-{EM}.
\newblock {\em Inverse Problems}, 36(6), 2020.

\bibitem{beck2017first}
Amir Beck.
\newblock {\em First-order methods in optimization}.
\newblock SIAM, 2017.

\bibitem{beck2009gradient}
Amir Beck and Marc Teboulle.
\newblock Gradient-based algorithms with applications to signal recovery.
\newblock {\em Convex optimization in signal processing and communications},
  pages 42--88, 2009.

\bibitem{bendory2020single}
Tamir Bendory, Alberto Bartesaghi, and Amit Singer.
\newblock Single-particle cryo-electron microscopy: Mathematical theory,
  computational challenges, and opportunities.
\newblock {\em IEEE Signal Processing Magazine}, 37(2):58--76, 2020.

\bibitem{bendory2018toward}
Tamir Bendory, Nicolas Boumal, William Leeb, Eitan Levin, and Amit Singer.
\newblock Toward single particle reconstruction without particle picking:
  Breaking the detection limit.
\newblock {\em SIAM Journal on Imaging Sciences}, 16(2):886--910, 2023.

\bibitem{bendory2017bispectrum}
Tamir Bendory, Nicolas Boumal, Chao Ma, Zhizhen Zhao, and Amit Singer.
\newblock Bispectrum inversion with application to multireference alignment.
\newblock {\em IEEE Transactions on Signal Processing}, 66(4):1037--1050, 2017.

\bibitem{bendory2022dihedral}
Tamir Bendory, Dan Edidin, William Leeb, and Nir Sharon.
\newblock Dihedral multi-reference alignment.
\newblock {\em IEEE Transactions on Information Theory}, 68(5):3489--3499,
  2022.

\bibitem{bendory2021compactification}
Tamir Bendory, Ido Hadi, and Nir Sharon.
\newblock Compactification of the rigid motions group in image processing.
\newblock {\em SIAM Journal on Imaging Sciences}, 15(3):1041--1078, 2022.

\bibitem{bendory2020super}
Tamir Bendory, Ariel Jaffe, William Leeb, Nir Sharon, and Amit Singer.
\newblock Super-resolution multi-reference alignment.
\newblock {\em Information and Inference: A Journal of the IMA},
  11(2):533--555, 2022.

\bibitem{bendory2022multi}
Tamir Bendory, Ti-Yen Lan, Nicholas~F Marshall, Iris Rukshin, and Amit Singer.
\newblock Multi-target detection with rotations.
\newblock {\em Inverse Problems and Imaging}, 2022.

\bibitem{boumal2018heterogeneous}
Nicolas Boumal, Tamir Bendory, Roy~R Lederman, and Amit Singer.
\newblock Heterogeneous multireference alignment: A single pass approach.
\newblock In {\em CISS}, pages 1--6, 2018.

\bibitem{Buzzard18PnP}
Gregery~T. Buzzard, Stanley~H. Chan, Suhas Sreehari, and Charles~A. Bouman.
\newblock Plug-and-play unplugged: Optimization-free reconstruction using
  consensus equilibrium.
\newblock {\em SIAM Journal on Imaging Sciences}, 11(3):2001--2020, 2018.

\bibitem{PnP_ADMM}
Stanley~H. Chan, Xiran Wang, and Omar~A. Elgendy.
\newblock Plug-and-{P}lay {ADMM} for image restoration: Fixed-point convergence
  and applications.
\newblock {\em IEEE Transactions on Computational Imaging}, 3(1):84--98, 2017.

\bibitem{combettes2004solving}
Patrick~L Combettes*.
\newblock Solving monotone inclusions via compositions of nonexpansive averaged
  operators.
\newblock {\em Optimization}, 53(5-6):475--504, 2004.

\bibitem{dabov2007image}
Kostadin Dabov, Alessandro Foi, Vladimir Katkovnik, and Karen Egiazarian.
\newblock Image denoising by sparse {3-D} transform-domain collaborative
  filtering.
\newblock {\em IEEE Transactions on image processing}, 16(8):2080--2095, 2007.

\bibitem{dempster1977maximum}
Arthur~P Dempster, Nan~M Laird, and Donald~B Rubin.
\newblock Maximum likelihood from incomplete data via the {EM} algorithm.
\newblock {\em Journal of the Royal Statistical Society: Series B
  (Methodological)}, 39(1):1--22, 1977.

\bibitem{gribonval2020characterization}
R{\'e}mi Gribonval and Mila Nikolova.
\newblock A characterization of proximity operators.
\newblock {\em Journal of Mathematical Imaging and Vision}, 62(6):773--789,
  2020.

\bibitem{Heide2014FlexISP}
Felix Heide, Markus Steinberger, Yun-Ta Tsai, Mushfiqur Rouf, Dawid Pajak,
  Dikpal Reddy, Orazio Gallo, Jing Liu, Wolfgang Heidrich, Karen Egiazarian,
  Jan Kautz, and Kari Pulli.
\newblock Flexisp: a flexible camera image processing framework.
\newblock {\em ACM Trans. Graph.}, 33(6), nov 2014.

\bibitem{henderson1995potential}
Richard Henderson.
\newblock The potential and limitations of neutrons, electrons and {X}-rays for
  atomic resolution microscopy of unstained biological molecules.
\newblock {\em Quarterly reviews of biophysics}, 28(2):171--193, 1995.

\bibitem{janco2021accelerated}
Noam Janco and Tamir Bendory.
\newblock An accelerated expectation-maximization algorithm for multi-reference
  alignment.
\newblock {\em IEEE Transactions on Signal Processing}, 70:3237--3248, 2022.

\bibitem{kamilov2017plug}
Ulugbek~S Kamilov, Hassan Mansour, and Brendt Wohlberg.
\newblock A plug-and-play priors approach for solving nonlinear imaging inverse
  problems.
\newblock {\em IEEE Signal Processing Letters}, 24(12):1872--1876, 2017.

\bibitem{kimanius2021exploiting}
Dari Kimanius, Gustav Zickert, Takanori Nakane, Jonas Adler, Sebastian Lunz,
  C-B Sch{\"o}nlieb, Ozan {\"O}ktem, and Sjors~HW Scheres.
\newblock Exploiting prior knowledge about biological macromolecules in
  cryo-{EM} structure determination.
\newblock {\em IUCrJ}, 8(1), 2021.

\bibitem{kreymer2022two}
Shay Kreymer and Tamir Bendory.
\newblock Two-dimensional multi-target detection: An autocorrelation analysis
  approach.
\newblock {\em IEEE Transactions on Signal Processing}, 70:835--849, 2022.

\bibitem{kreymer2023stochastic}
Shay Kreymer, Amit Singer, and Tamir Bendory.
\newblock A stochastic approximate expectation-maximization for structure
  determination directly from cryo-{EM} micrographs.
\newblock {\em arXiv preprint arXiv:2303.02157}, 2023.

\bibitem{Lai2023Nabla}
Zeqiang Lai, Kaixuan Wei, Ying Fu, Philipp H\"{a}rtel, and Felix Heide.
\newblock $\nabla$-prox: Differentiable proximal algorithm modeling for
  large-scale optimization.
\newblock {\em ACM Trans. Graph.}, 42(4), jul 2023.

\bibitem{levin20183d}
Eitan Levin, Tamir Bendory, Nicolas Boumal, Joe Kileel, and Amit Singer.
\newblock {3D} ab initio modeling in cryo-{EM} by autocorrelation analysis.
\newblock In {\em ISBI}, pages 1569--1573. IEEE, 2018.

\bibitem{Meinhardt17Learning}
Tim Meinhardt, Michael Moller, Caner Hazirbas, and Daniel Cremers.
\newblock Learning proximal operators: Using denoising networks for
  regularizing inverse imaging problems.
\newblock In {\em ICCV}, 2017.

\bibitem{moreau1965proximite}
Jean-Jacques Moreau.
\newblock Proximit{\'e} et dualit{\'e} dans un espace hilbertien.
\newblock {\em Bull. Soc. Math. France}, 93(2):273--299, 1965.

\bibitem{park2011stochastic}
Wooram Park, Charles~R Midgett, Dean~R Madden, and Gregory~S Chirikjian.
\newblock A stochastic kinematic model of class averaging in single-particle
  electron microscopy.
\newblock {\em The International journal of robotics research}, 30(6):730--754,
  2011.

\bibitem{perry2019sample}
Amelia Perry, Jonathan Weed, Afonso~S Bandeira, Philippe Rigollet, and Amit
  Singer.
\newblock The sample complexity of multireference alignment.
\newblock {\em SIAM Journal on Mathematics of Data Science}, 1(3):497--517,
  2019.

\bibitem{punjani2017cryosparc}
Ali Punjani, John~L Rubinstein, David~J Fleet, and Marcus~A Brubaker.
\newblock cryo{SPARC}: algorithms for rapid unsupervised cryo-{EM} structure
  determination.
\newblock {\em Nature methods}, 14(3):290--296, 2017.

\bibitem{Reehorst19RED}
Edward~T. Reehorst and Philip Schniter.
\newblock Regularization by denoising: Clarifications and new interpretations.
\newblock {\em IEEE Transactions on Computational Imaging}, 5(1):52--67, 2019.

\bibitem{RED}
Yaniv Romano, Michael Elad, and Peyman Milanfar.
\newblock The little engine that could: Regularization by denoising ({RED}).
\newblock {\em SIAM Journal on Imaging Sciences}, 10(4):1804--1844, 2017.

\bibitem{Rond16Poisson}
Arie Rond, Raja Giryes, and Michael Elad.
\newblock Poisson inverse problems by the {P}lug-and-{P}lay scheme.
\newblock {\em Journal of Visual Communication and Image Representation},
  41:96--108, 2016.

\bibitem{rosen2020certifiably}
David~M Rosen, Luca Carlone, Afonso~S Bandeira, and John~J Leonard.
\newblock A certifiably correct algorithm for synchronization over the special
  {E}uclidean group.
\newblock In {\em Algorithmic Foundations of Robotics XII}, pages 64--79.
  Springer, 2020.

\bibitem{PnP19Ryu}
Ernest Ryu, Jialin Liu, Sicheng Wang, Xiaohan Chen, Zhangyang Wang, and Wotao
  Yin.
\newblock Plug-and-play methods provably converge with properly trained
  denoisers.
\newblock In {\em International Conference on Machine Learning}, volume~97,
  pages 5546--5557, 09--15 Jun 2019.

\bibitem{Ryu2019PlugandPlayMP}
Ernest~K. Ryu, Jialin Liu, Sicheng Wang, Xiaohan Chen, Zhangyang Wang, and
  Wotao Yin.
\newblock Plug-and-play methods provably converge with properly trained
  denoisers.
\newblock In {\em International Conference on Machine Learning}, 2019.

\bibitem{scheres2012relion}
Sjors~HW Scheres.
\newblock {RELION}: implementation of a {B}ayesian approach to cryo-{EM}
  structure determination.
\newblock {\em Journal of structural biology}, 180(3):519--530, 2012.

\bibitem{sharon2020method}
Nir Sharon, Joe Kileel, Yuehaw Khoo, Boris Landa, and Amit Singer.
\newblock Method of moments for {3D} single particle ab initio modeling with
  non-uniform distribution of viewing angles.
\newblock {\em Inverse Problems}, 36(4):044003, 2020.

\bibitem{Sun2019Block}
Y.~Sun, J.~Liu, and U.~S. Kamilov.
\newblock Block coordinate regularization by denoising.
\newblock In {\em Proc. Ann. Conf. Neural Information Processing Systems
  ({N}eur{IPS})}, May 2019.

\bibitem{Sun19Online}
Yu~Sun, Brendt Wohlberg, and Ulugbek~S. Kamilov.
\newblock An online plug-and-play algorithm for regularized image
  reconstruction.
\newblock {\em IEEE Transactions on Computational Imaging}, 2019.

\bibitem{Sun2019Online}
Yu~Sun, Brendt Wohlberg, and Ulugbek~S. Kamilov.
\newblock An online plug-and-play algorithm for regularized image
  reconstruction.
\newblock {\em IEEE Transactions on Computational Imaging}, 5(3):395--408,
  2019.

\bibitem{tirer2018image}
Tom Tirer and Raja Giryes.
\newblock Image restoration by iterative denoising and backward projections.
\newblock {\em IEEE Transactions on Image Processing}, 28(3):1220--1234, 2018.

\bibitem{tirer2019super}
Tom Tirer and Raja Giryes.
\newblock Super-resolution via image-adapted denoising {CNNs}: Incorporating
  external and internal learning.
\newblock {\em IEEE Signal Processing Letters}, 26(7):1080--1084, 2019.

\bibitem{venkatakrishnan2013plug}
Singanallur~V Venkatakrishnan, Charles~A Bouman, and Brendt Wohlberg.
\newblock Plug-and-play priors for model based reconstruction.
\newblock In {\em IEEE Global Conference on Signal and Information Processing},
  pages 945--948, 2013.

\bibitem{wei2020tuningfree}
Kaixuan Wei, Angelica Aviles-Rivero, Jingwei Liang, Ying Fu, Carola-Bibiane
  Sch{\"o}nlieb, and Hua Huang.
\newblock Tuning-free plug-and-play proximal algorithm for inverse imaging
  problems.
\newblock In {\em ICML}, pages 10158--10169, 2020.

\bibitem{yamada2005hybrid}
Isao Yamada and Nobuhiko Ogura.
\newblock Hybrid steepest descent method for variational inequality problem
  over the fixed point set of certain quasi-nonexpansive mappings.
\newblock 2005.

\bibitem{zabatani2024score}
Alon Zabatani, Shay Kreymer, and Tamir Bendory.
\newblock Score-based diffusion priors for multi-target detection.
\newblock In {\em 2024 58th Annual Conference on Information Sciences and
  Systems (CISS)}, pages 1--6. IEEE, 2024.

\bibitem{DPIR}
Kai Zhang, Yawei Li, Wangmeng Zuo, Lei Zhang, Luc Van~Gool, and Radu Timofte.
\newblock Plug-and-play image restoration with deep denoiser prior.
\newblock {\em IEEE Transactions on Pattern Analysis and Machine Intelligence},
  2021.

\bibitem{zhang2017learning}
Kai Zhang, Wangmeng Zuo, Shuhang Gu, and Lei Zhang.
\newblock Learning deep {CNN} denoiser prior for image restoration.
\newblock In {\em CVPR}, pages 3929--3938, 2017.

\bibitem{zhao2022adaptive}
Cuicui Zhao, Jun Liu, and Xinqi Gong.
\newblock An adaptive variational model for multireference alignment with mixed
  noise.
\newblock In {\em 2022 IEEE international conference on bioinformatics and
  biomedicine (BIBM)}, pages 692--699. IEEE, 2022.

\bibitem{Zhao2023Computational}
Cuicui Zhao, Da~Lu, Qian Zhao, Chongjiao Ren, Huangtao Zhang, Jiaqi Zhai,
  Jiaxin Gou, Shilin Zhu, Yaqi Zhang, and Xinqi Gong.
\newblock Computational methods for in situ structural studies with cryogenic
  electron tomography.
\newblock {\em Frontiers in Cellular and Infection Microbiology}, 13, 2023.

\bibitem{zwart2003fast}
J~Portegies Zwart, Ren{\'e} van~der Heiden, Sjoerd Gelsema, and Frans Groen.
\newblock Fast translation invariant classification of hrr range profiles in a
  zero phase representation.
\newblock {\em IEE Proceedings-Radar, Sonar and Navigation}, 150(6):411--418,
  2003.

\end{thebibliography}

\end{document}